\newif\ifonecol 
\newif\ifsavespace
\newtheorem{theorem}{Theorem}\theoremstyle{definition}
\newtheorem{proposition}{Proposition}
\newtheorem{definition}{Definition}
\newcommand{\Abf}{\mathbf{A}}
\newcommand{\ubf}{\mathbf{u}}
\newcommand{\xbf}{\mathbf{x}}
\newcommand{\bbf}{\mathbf{b}}
\newcommand{\ybf}{\mathbf{y}}
\newcommand{\GN}{\mathbf{G}_N}
\newcommand{\Fbb}{\mathbb{F}}
\newcommand{\Acal}{\mathcal{A}} 
\newcommand{\Ccal}{\mathcal{C}} 
\newcommand{\Ccalm}{\mathcal{C}_{\text{M}}} 
\newcommand{\Ccalmb}{\mathcal{C}_{\text{M}}^{\text{block}}} 
\newcommand{\Ccalmbr}{\mathcal{C}_{\text{M}}^{\text{BR}}} 
\newcommand{\Ccalb}{\mathcal{C}^{\text{block}}} 
\newcommand{\Ccalbr}{\mathcal{C}^{\text{BR}}} 
\newcommand{\Icalb}{\mathcal{I}_{\text{block}}} 
\newcommand{\Icalbr}{\mathcal{I}_{\text{BR}}} 
\newcommand{\Zcalb}{\mathcal{Z}_{\text{block}}} 
\newcommand{\Zcalbr}{\mathcal{Z}_{\text{BR}}} 
\newcommand{\Ical}{\mathcal{I}} 
\newcommand{\Fcal}{\mathcal{F}} 
\newcommand{\Zcal}{\mathcal{Z}} 
\newcommand{\BLTA}{\mathsf{BLTA}}
\newcommand{\rrn}{\mathcal{R}(r,n)}
\newcommand{\LT}{\text{LT}}
\newcommand{\NBstar}{\lvert\star\rvert}
\newcommand{\Ebb}{\mathbb{E}}
\DeclareMathOperator{\dec}{dec}
\DeclareMathOperator{\adec}{adec}
\newcolumntype{C}[1]{>{\centering\let\newline\\\arraybackslash\hspace{0pt}}m{#1}}
\begin{document}
\definecolor{darkred}{rgb}{0.75, 0, 0}
\definecolor{darkgreen}{rgb}{0, 0.45, 0}
\definecolor{bluegreen}{rgb}{0, 0.84, 0.84}
\definecolor{matlab1}{rgb}{0, 0.447, 0.741} 
\definecolor{matlab2}{rgb}{0.850, 0.325, 0.098} 
\definecolor{matlab3}{rgb}{0.929, 0.694, 0.125} 
\definecolor{matlab4}{rgb}{0.494, 0.184, 0.556} 
\definecolor{matlab5}{rgb}{0.466, 0.674, 0.188} 
\definecolor{matlab6}{rgb}{0.301, 0.745, 0.933} 
\definecolor{matlab7}{rgb}{0.635, 0.078, 0.184} 
\newcommand{\mypoint}[2]{\tikz[remember picture]{\node[inner sep=0, anchor=base](#1){$#2$};}}
\usetikzlibrary{positioning}
\newacronym{upo}{UPO}{universal partial order}
\newacronym{ae}{AE}{Automorphism Ensemble}
\newacronym{br}{BR}{bit-reversal}
\newacronym{syspc}{SP}{systematic polar}
\newacronym{spc}{SPC}{shortened polar}
\newacronym{scl}{SCL}{SC-List}
\newacronym{sc}{SC}{Successive Cancellation}
\newacronym{blta}{BLTA}{block-lower-triangular affine}
\newacronym{ber}{BER}{Bit-Error-Rate}
\newacronym{bler}{BLER}{block-error rate}
\newacronym{lt}{LT}{lower-triangular}
\newacronym{lta}{LTA}{lower-triangular affine}
\newacronym{llr}{LLR}{log-likelihood ratio}
\newacronym{ml}{ML}{maximum likelihood}
\newacronym{rm}{RM}{Reed-Muller}
\newacronym{bp}{BP}{Belief Propagation}
\newacronym{scan}{SCAN}{Soft CANcellation}
\newacronym{awgn}{AWGN}{Additive White Gaussian Noise}
\newacronym{bpsk}{BPSK}{Binary Phase-Shift Keying}
\newacronym{ca}{CA}{CRC-aided}
\newacronym{crc}{CRC}{cyclic redundancy check}
\newacronym{db}{dB}{decibel}
\newacronym{snr}{SNR}{signal-to-noise ratio}
\newacronym{urllc}{URLLC}{ultra-reliable low-latency communications}
\title{Shortened Polar  Codes under Automorphism Ensemble Decoding}
\author{Charles Pillet, Ilshat Sagitov, Valerio Bioglio, and Pascal Giard
\thanks{Charles Pillet, Ilshat Sagitov, and Pascal Giard (\{charles.pillet, ilshat.sagitov\}@lacime.etsmtl.ca, pascal.giard@etsmtl.ca) are with LaCIME, École de technologie supérieure (ÉTS), Montréal, QC, Canada. Valerio Bioglio (valerio.bioglio@unito.it) is with Department of Computer Science, Università degli Studi di Torino, Torino, Piedmont, Italy.}}

\maketitle
 
 \begin{abstract}
    In this paper, we propose a low-latency decoding solution of shortened polar codes based on their automorphism groups.
    The automorphism group of shortened polar codes, designed according to two existing shortening patterns, are shown to be limited but non-empty, making the \gls{ae} decoding of shortened polar codes possible. 
    Extensive simulation results for shortened polar codes under \gls{ae} are provided and are compared to the \gls{scl} algorithm.
    The block-error rate of shortened polar codes under \gls{ae} matches or beats \gls{scl} while lowering the decoding latency. 
\end{abstract}
   \begin{IEEEkeywords}
    Encoding, Decoding, Polar codes.
    \end{IEEEkeywords}
\glsresetall
	\IEEEpeerreviewmaketitle	
	\section{Introduction}
    Polar codes \cite{ArikanPolarCodes} are based on the channel polarization induced by the matrix $\GN$  recursively generated from the binary kernel $\mathbf{G}_2$. 
    Polar codes achieve the capacity of symmetric binary memoryless channels under the low-complexity \gls{sc} decoding algorithm \cite{ArikanPolarCodes}.
    However, the capacity is achieved asymptotically on the code length. Polar codes exhibit a poor maximum likelihood (ML) bound in the finite length regime.
    Constructions of the information set reducing the polarization have been proposed to improve the finite length performance \cite{FromPolartoRM}.
    Regardless of the information set, the designed code is a $\GN$-coset code \cite{ArikanPolarCodes}, a code based on the matrix $\GN$.
    For a finite length $N$, concatenating a \gls{crc} to the polar code permits to improve the error-correction performance. 
    \Gls{ca}-polar codes are usually decoded with the \gls{scl} \cite{SCL} algorithm.
    \Gls{ca}-polar codes are included in 5G-New Radio as one of the coding schemes for the control channel \cite{5GHuawei}.

    During the standardisation, both academia and industry joined their efforts to make the polar code parameters more versatile.  
    Various length-matching techniques have been successfully applied to polar codes to adjust the code length with 1-bit granularity, including shortening, puncturing and extending \cite{5GHuawei,shortening_dynamic,Punc_short_PC_nonsys,Punc_but_short}.
    In a shortened code, $S$ bits known at the decoder are not transmitted.
    In practice, these bits are included in the frozen set.
    In the 5G polar-code shortening scheme \cite{5GHuawei}, called block shortening \cite{Punc_but_short}, the last $S$ bits are not transmitted. 
    The most reliable bit locations being at the end, block shortening discards many highly-reliable locations.
    To overcome this issue, a \gls{br} shortening pattern composed by the last $S$ bits indexed by the \gls{br} permutation has been proposed in \cite{Punc_short_PC_nonsys}. 
    
	\Gls{ae} decoding \cite{geiselhart2020automorphism} performs similarly to the 5G polar codes using CA-SCL  \cite{RateCompatibleAE}.
    In \gls{ae} decoding, $M$ permutations are selected in the affine automorphism group of the code to generate $M$ instances of the noisy vector.
	The $M$ permuted noisy vectors are decoded in parallel and the most likely candidate codeword is selected.
    \Gls{ae} decoding exhibits good decoding performance with ``structured'' $\GN$-coset codes \cite{PSMC}. 
    These codes are compliant with the \gls{upo}, an order among the integers valid for all channel conditions.
    Codes compliant with the \gls{upo}, denoted as \emph{polar-like} in the paper, exhibit a large automorphism group \cite{geiselhart2021automorphismPC}, useful for \gls{ae} decoding, but require specific designs \cite{AE_v2}.
    These constraints make one-bit granularity rate-compatible sequences of such codes impossible \cite{RateCompatibleAE}.
	\Gls{ae} decoding was also proposed for polar codes in \cite{geiselhart2021automorphismPC,AE_v1}.
    Recently, a hardware implementation of \gls{ae} \cite{AESC_HW} has been shown to outperform \gls{scl} decoders in latency, area and energy efficiency while offering similar error-correction performance for short codes \cite{RateCompatibleAE}. 
            
    Modern wireless communications need rate-matching techniques \cite{5GHuawei} and low-latency decoding solutions. 
    In this paper, we investigate low-latency \gls{ae} decoding of shortened polar codes.    
    The shortening constraint destroys the code structure leading to an uncertain automorphism group.
    The automorphism group is found by considering the shortening constraint while describing these codes as $\GN$-coset codes \cite{AE_v1}.
    A thorough analysis of the computed automorphism groups is performed for all shortening parameters and shows that the \gls{ae} algorithm can  decode any shortened polar code.
    Simulation results in terms of \gls{bler} and average decoding time illustrate the benefits of the approach.
\section{GN-coset codes definition and decoding}\label{sec:preliminaries}
The binary kernel $\mathbf{G}_2=\left[\begin{smallmatrix}1 &0\\ 1 &1\end{smallmatrix}\right]\in\Fbb_2^{2\times 2}$, where $\Fbb_2$ denotes the binary field, generates the transformation matrix $\GN=\mathbf{G}_2^{\otimes n}$, where $(\cdot)^{\otimes n}$ denotes the $n^{th}$ Kronecker power.
The \emph{$\GN$-coset codes} are based on the transformation matrix $\GN$ \cite{ArikanPolarCodes}.
\begin{definition}[$\GN$-coset codes]
A $(N,K)$ $\GN$-coset code of length $N$ and dimension $K$ is defined by the information set $\Ical\subseteq[N]\triangleq\{0,1,\dots,N-1\}$, with $|\Ical|=K$. The frozen set is $\Fcal=\Ical^c\subseteq[N]$. 
Encoding is performed as $\xbf=\ubf \GN$, where $\ubf,\xbf\in\Fbb_2^N$ are the input vector and the codeword, respectively.
The input vector $\ubf$ is generated by assigning $u_i=0$ if $i\in\Fcal$ and by storing $K$ information bits in the positions stated in $\Ical$.
\end{definition}
The number of possible $(N,K)$ $\GN$-coset codes is very large.  
Polar codes, Reed-Muller codes, and polar-like codes are all defined as sub-families of $\GN$-coset codes.
The connections among these sub-families are depicted in \autoref{fig:subfamily}.
\begin{figure}
    \centering
    \resizebox{0.30\textwidth}{!}{

\begin{tikzpicture}
\usetikzlibrary{shapes.misc}

\tikzset{cross/.style={cross out, draw=black, minimum size=2*(#1-\pgflinewidth), inner sep=0pt, outer sep=0pt},
cross/.default={3pt}}
\draw[draw=black,fill=gray!10] (0,0) ellipse (4cm and 1.6cm);
\draw[draw=black,dashed,fill=gray!35] (0,0) ellipse (3.5cm and 1.25cm);
\draw[draw=black,dashed,fill=gray!35] (0,0) ellipse (2.65cm and 0.8cm);
\draw[draw=black,dashed,fill=gray!10] (0,0) ellipse (2.65cm and 0.8cm);
\draw[draw=black,fill=gray!60] (0,0) ellipse (1.5cm and 0.65cm);
\draw[draw=black,fill=gray!80] (0,0) ellipse (0.75cm and 0.375cm);
\draw (0,.30) node[cross,rotate=0] {};
\draw (0.45,0.05) node[cross,rotate=30] {};
\draw (-.45,0.25) node[cross,rotate=30] {};
\draw (-.2,0.05) node[cross,rotate=50] {};
\draw (.2,-0.25) node[cross,rotate=70] {};
\draw (-.21,-0.25) node[cross,rotate=90] {};

\draw (0,-1.6) node[below] {$\GN$-coset codes};
\draw (0,-0.92) node[] {$\GN$-coset mother code};
\draw (2.5,1.75) node[right] { Reed-Muller codes}; 
\draw[->] (2.5,1.75) -- (0.5,0.1); 
\draw (-2.5,1.75) node[left] { Polar codes}; 
\draw[->] (-2.5,1.75) -- (-0.45,-0.1); 
\draw (0,1.8) node[above] { Polar-like codes}; 
\draw[->] (0,1.8) -- (0.0,0.525); 
\end{tikzpicture}

    \caption{Representation of sub-families of $\GN$-coset codes.}
    \label{fig:subfamily}
\end{figure}

\subsection{$\GN$-Coset Sub-Families}
Polar codes are defined based on the polarization induced by $\GN$ \cite{ArikanPolarCodes}.
A channel $W$ with symmetric capacity $0\leq I(W)\leq1$ is split by $\mathbf{G}_2$ in two sub-channels $W^-$ and $W^+$ verifying $I(W^-)<I(W)<I(W^+)$.
Given the recursive construction of $\GN=\mathbf{G}_2^{\otimes n}$, $\GN$ produces asymptotically reliable ($I(W^+)=1$) and unreliable ($I(W^-)=0$) bit-channels. 
For finite code lengths, some of the bit-channels are not fully polarized leading to alternative designs of $\Ical$.
Polar codes use the polarization induced by $\GN$ to define their $\Ical$.
\begin{definition}[Polar codes]\label{def:polar_codes}
A $(N,K)$ polar code is a $(N,K)$ $\GN$-coset code for which $\Ical$ includes the $K$ most reliable bit-channels resulting from the polarization of a given channel. 
\end{definition}
\glsfirst{rm} codes \cite{Reed,Muller} can as well be seen as a sub-family of $\GN$-coset codes too \cite{ArikanPolarCodes}.
\begin{definition}[Reed-Muller codes]\label{def:RM_codes}
A $\rrn$ \gls{rm} code of order $r$ and variable $n$ is a $\GN$-coset code with parameters $\left(N=2^n,K=\sum_{k=0}^{r}{n \choose k}\right)$ for which $\Ical$ is composed by all integers $k$ whose binary representation Hamming weight is greater or equal to $n-r$.
\end{definition}
Contrary to polar codes, \gls{rm} codes are not channel dependent.
However, for a code length $N=2^n$, only $n+1$ code dimensions allow a \gls{rm} code construction.
A \emph{universal partial order} (\gls{upo}) among virtual channels exists and is calculated through two comparison rules \cite{partial_order}. 
The partial order $i\succeq j$ compares two indices $i,j\in[N]$, proving that the $j^{th}$ bit-channel is always weaker than the $i^{th}$, no matter the channel.
Every couple of indices are not comparable, hence the name ``partial''. 
The comparison rules are not given, but the last index $N-1$ is always better no matter the channel.
\emph{Polar-like codes} are based on the \gls{upo} and include polar and \gls{rm} codes as shown in \autoref{fig:subfamily}.
\begin{definition}[Polar-like codes]\label{def:polar_like_codes}
A $(N,K)$ polar-like code is a $(N,K)$ $\GN$-coset code compliant with the \gls{upo}, i.e, $i\succeq j \, \text{and} \,j\in\Ical \implies i\in\Ical.$
\end{definition}
\subsection{Affine Automorphism Group of $\GN$-Coset Codes}
An automorphism $\pi$ of a code $\Ccal$ is a permutation mapping any codeword into another codeword of the codebook. 
The affine automorphism group $\mathcal{A}(\mathcal{C})$ is the set of automorphisms that can be described as an affine transformation, i.e., $\mathbf{z}' = \Abf\mathbf{z} + \bbf$ where $\mathbf{z},\mathbf{z}'\in\Fbb_2^n$ are the binary representation of an integer $i\in[N]$, $\mathbf{A}\in\Fbb_2^{n\times n}$ is an invertible matrix and $\bbf\in\Fbb_2^n$ a vector.
For polar-like codes, $\mathcal{A}(\mathcal{C})$ is isomorphic to the \gls{blta} group $\BLTA(\mathbf{S})$ with profile $\mathbf{S}$, denoting the size of the blocks alongside the diagonal \cite{geiselhart2021automorphismPC,AEjournal}.
For \gls{rm} codes, $\Acal(\Ccal)$ is isomorphic to the general affine group \cite{WilliamsSloane}, i.e., $\mathbf{S}=n$.
If the code is not polar-like, $\Acal(\Ccal)$ likely exhibits an irregular shape \cite{AE_v1}.

\subsection{Decoders of $\GN$-Coset Codes}
\glsfirst{sc} is a soft-input/hard-output decoder that can be described as a binary tree search, where the tree is traversed depth-first starting from the left branch \cite{ArikanPolarCodes}.
Polar codes are defined to be capacity-achieving under \gls{sc} decoding over binary-memoryless channels \cite{ArikanPolarCodes}.
\glsfirst{scl} is a list decoder based on \gls{sc} \cite{SCL}, namely, a decoder running up to $L$ SC decoding instances in parallel to follow the $L$ best decoding paths.
\gls{scl} offers an outstanding decoding performance for CRC-aided (\gls{ca}) $\GN$-coset codes.
\gls{bp}  and \gls{scan}  are two soft-input/soft-output iterative decoding algorithms of $\GN$-coset codes. 
\gls{scan} iterates using the \gls{sc} schedule while \gls{bp} iterates by passing from one stage of the factor graph to another leading to higher throughput \cite{geiselhart_subcode}.
Due to their iterative nature, SCAN and BP are generally more complex than SC.

Given an aforementioned decoder $\dec$, the automorphism decoder $\adec$ \cite{AE_v2} is defined as:
\begin{equation}
\label{eq:adec}
\adec(\ybf,\pi, \Fcal) = \pi^{-1}\left( \dec(\pi(\ybf),\Fcal) \right),
\end{equation}
where $\ybf\in\mathbb{R}^N$ is the received signal and $\pi \in \Acal(\Ccal)$. 
An \gls{ae} decoder \cite{geiselhart2020automorphism} consists of $M$ $\adec$ instances running in parallel with different $\pi_m\in\mathcal{A}(\mathcal{C})$ for $ 1\leq m\leq M$. Each $\adec$ returns a codeword candidate $\hat{\xbf}_m$.
A \gls{ml}-criteria is used to select the codeword candidate $\hat{\xbf}$.

\subsection{Code Shortening}\label{subsec:shortening}
A $\GN$-coset mother code $\Ccalm$ of length $N$ can be shortened to length $N-S$ if there exists a shortening set $\Zcal\subset[N]$, with $1\leq |\Zcal| = S < \frac{N}{2}$, such that $\xbf_\Zcal=\mathbf{0}$ regardless the values of the information bits. 
These shortened bits $\xbf_\Zcal$ are not transmitted and are evaluated as infinite \glspl{llr} during the decoding.
The information set $\Ical$ is designed selecting $K$ indices from $[N]\setminus\mathcal{Z}$. 
If the $K$ most reliable virtual channels are chosen, a \emph{shortened polar code} is obtained:
\begin{definition}[Shortened polar codes]\label{def:SPC}
    A $(N',K',\Zcal)$ shortened polar code with shortening set $\Zcal$ is a code based on a $(N=N'+S,K=K')$ $\GN$-coset mother code $\Ccalm$ such that $\Zcal\subseteq\Fcal$ and $\Ical$ includes the $K$ most reliable virtual channels in $[N]\setminus\Zcal$. 
\end{definition}
Two strategies of designing shortening sets have been proposed to force $\xbf_\Zcal=\mathbf{0}$; either introducing dynamic frozen bits \cite{shortening_dynamic} or finding a frozen bits subset that is independent from $\Ical$.
The latter corresponds to \emph{block shortening} (if $\Zcal$ is composed of the last $S$ bits) \cite{Punc_but_short} and the \emph{bit-reversal (BR) shortening} (if $\Zcal$ is composed of the last $S$ bits from the \gls{br} permutation) \cite{Punc_short_PC_nonsys}.
In this paper, both block and \gls{br} patterns are studied.

Definition \ref{def:SPC} can be modified by having  $\Zcal\subset\Ical$ leading to $|\Ical|=K'+S$.
It still requires $\xbf_\Zcal=\mathbf{0}$.
However, the feedback during the decoding is expected to be inferior, since there are less bits considered frozen by the decoder.
For this reason, the shortening approach $\Zcal\subseteq\Fcal$ will be studied by default while the approach $\Zcal\subset\Ical$ will be clearly stated if used.

\section{Group Properties of Shortened Polar Codes}\label{sec:study_nonsys}
The affine automorphism group of the mother code is studied as its knowledge is required to decode shortened polar codes with \gls{ae} \eqref{eq:adec}.
The action of the permutations on the mother code is also investigated.
\subsection{Polar-like Codes and $\GN$-Coset Mother Codes}\label{subsec:exampleshortened}
Block and \gls{br} patterns can be used to guarantee that $\xbf_\Zcal=\mathbf{0}$ \cite{Punc_short_PC_nonsys}, i.e., to perform shortening. 
It leads to Proposition \ref{prop:short_polarlike}.
\begin{proposition}\label{prop:short_polarlike}
The $\GN$-coset mother code $\Ccalm$ of any block or BR shortened polar code is not a polar-like code.
\end{proposition}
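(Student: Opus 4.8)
The plan is to exploit the one concrete fact about the UPO stated in the excerpt---that the last index $N-1$ is universally the most reliable, i.e., $N-1\succeq j$ for every $j\in[N]$---together with the observation that both shortening patterns freeze precisely this index. Since a nonempty polar-like code must always retain its top index as information, freezing $N-1$ renders polar-likeness impossible. I would argue by contradiction, assuming $\Ccalm$ is polar-like and deriving $\Ical=\emptyset$.

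First I would rewrite Definition \ref{def:polar_like_codes} in contrapositive form: for a polar-like code, $i\succeq j$ and $i\in\Fcal$ imply $j\in\Fcal$. Setting $i=N-1$ and using $N-1\succeq j$ for all $j$, this says that $N-1\in\Fcal$ forces every index to be frozen, i.e., $\Fcal=[N]$ and $\Ical=\emptyset$. As a shortened polar code has dimension $K=K'\geq1$, its polar-like mother code can never have $N-1$ in its frozen set.

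Next I would check that both patterns place $N-1$ in $\Zcal$, which by the constraint $\Zcal\subseteq\Fcal$ of Definition \ref{def:SPC} puts $N-1$ into $\Fcal$. For block shortening, $\Zcal=\{N-S,\dots,N-1\}$ contains $N-1$ outright. For BR shortening, $\Zcal$ is the image of $\{N-S,\dots,N-1\}$ under the bit-reversal permutation; because the binary representation of $N-1$ is the all-ones word, it is a fixed point of bit-reversal, so $N-1\in\Zcal$ as well. Hence $N-1\in\Fcal$ in both cases.

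Combining the two observations closes the argument: $N-1\in\Fcal$ together with polar-likeness forces $\Ical=\emptyset$, contradicting $K\geq1$, so $\Ccalm$ cannot be polar-like. I expect the only point requiring care to be the BR case, where one must fix the precise definition of the BR shortening set and confirm that $N-1$ is a fixed point of bit-reversal; the block case and the contrapositive step follow immediately from the maximality of $N-1$ in the UPO.
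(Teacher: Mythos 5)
Your proposal is correct and follows essentially the same route as the paper: both arguments rest on the facts that block and BR shortening place $N-1$ in $\Zcal\subseteq\Fcal$ (the paper notes $N-1$ is ``the first to be shortened''; you verify it explicitly, including that $N-1$ is a fixed point of bit-reversal) and that $N-1\succeq j$ for all $j\in[N-1]$, contradicting Definition~\ref{def:polar_like_codes}. If anything, your version is slightly more careful than the paper's, since you make explicit the contrapositive step and the need for $K=K'\geq 1$ to obtain the contradiction.
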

\begin{proof}
Given the definition of block and \gls{br} shortening, the index $N-1\in[N]$ is the first to be shortened, thus $N-1\in\Zcal\subseteq\Fcal$.
Since, $\forall j\in[N-1], N-1\succeq j$ and $N-1\notin\Ical$, $\Ccalm$ is not compliant with the \gls{upo} and thus is not a polar-like code according to Definition~\ref{def:polar_like_codes}.
\end{proof}
Given Proposition~\ref{prop:short_polarlike}, the $\GN$-coset mother codes are located  in the $\GN$-coset code family but outside the polar-like code sub-family in \autoref{fig:subfamily}.
In order to decode a shortened polar code with the \gls{ae} decoder, the knowledge of the affine automorphism group $\Acal(\Ccalm)$ of its $\GN$-coset mother code $\Ccalm$ is required.
However, since $\Ccalm$ is not compliant with the \gls{upo}, its affine automorphism group is likely to not be isomorphic to one of the \gls{blta} groups \cite{AE_v1}.
Next, we focus on the affine automorphism group $\Acal(\Ccalm)$.
\subsection{Affine Automorphism Group of Mother Codes}\label{subsec:aff_auto_group_SPC}
The affine automorphism group $\Acal(\Ccal)$ of any $\GN$-coset code $\Ccal$ is retrieved on the basis of the information set $\Ical$ \cite[Theorem 2]{AE_v1}. 
It returns a list of admissible positions, noted $\star$ in the binary invertible matrix $\mathbf{A}$. 
Both values $\{0,1\}$ are allowed on a $\star$'s position.
The set of permutations given by \cite[Theorem 2]{AE_v1} for a given $\GN$-coset code is noted $\Pi$.     
For the mother code $\Ccalm$ of a shortened polar code, $\Pi$ is refined to retrieve $\Acal(\Ccalm)$.
\begin{theorem}\label{theo:group_shortening}
    The affine automorphism group $\Acal(\Ccalm)$ of a $\GN$-coset mother code $\Ccalm$ corresponds to the permutations $\pi\in\Pi$ given by \cite[Theorem 2]{AE_v1} verifying 
    \begin{align}\label{eq:constraint_perm}
        \pi(\Zcal)=\Zcal,
    \end{align}
    where $\pi(\Zcal)=\{\pi(z)\,:\,z\in\Zcal\}$.
\end{theorem}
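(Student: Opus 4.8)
The plan is to prove the set equality by establishing both inclusions, working in the polynomial–evaluation picture of $\GN$-coset codes: a codeword is the evaluation vector of a polynomial drawn from $\mathrm{span}\{g_i : i\in\Ical\}$, where $g_i$ is the monomial indexed by $i$, and an affine permutation $\pi(\mathbf{z})=\Abf\mathbf{z}+\bbf$ acts on codewords by precomposition $f\mapsto f\circ\pi$. In this language $\pi\in\Acal(\Ccalm)$ means precisely that $f\circ\pi\in\mathrm{span}\{g_i:i\in\Ical\}$ for every info monomial. The key fact I would exploit is that $\Pi$, as returned by \cite[Theorem 2]{AE_v1}, records only the admissible $\star$-pattern of $\Abf$ computed from $\Ical$; for a polar-like (downward-closed) information set this pattern already yields automorphisms, but since $\Ccalm$ is \emph{not} polar-like (Proposition~\ref{prop:short_polarlike}), $\Pi$ is a strict superset of $\Acal(\Ccalm)$, and the whole content is to show that the single stabiliser condition $\pi(\Zcal)=\Zcal$ is exactly what closes this gap.

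For the inclusion $\Acal(\Ccalm)\subseteq\{\pi\in\Pi:\pi(\Zcal)=\Zcal\}$ I would first note that any automorphism preserves $\Ical$ and therefore meets the necessary $\star$-conditions of \cite[Theorem 2]{AE_v1}, giving $\pi\in\Pi$. To obtain $\pi(\Zcal)=\Zcal$ I would invoke the shortening property $\xbf_\Zcal=\mathbf{0}$: the coordinates indexed by $\Zcal$ vanish on every codeword, so the set of identically-zero coordinates is a code invariant that any coordinate-permutation automorphism must map onto itself. Using the lower-triangular identity $\xbf_c=\sum_{i\succeq c}u_i$ together with the fact that $\Zcal$ is upward-closed in the bitwise order for both the block and \gls{br} patterns, I would verify that this invariant set is precisely $\Zcal$, whence $\pi(\Zcal)=\Zcal$.

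For the reverse inclusion I would take $\pi\in\Pi$ with $\pi(\Zcal)=\Zcal$ and show it preserves the span. Because the $\star$-pattern of $\Pi$ is read off from $\Ical$ exactly as for the polar-like parent, composing an info monomial $g_i$ with $\pi$ yields a combination of monomials that stays inside the code \emph{except} possibly for stray terms indexed in $\Zcal$ — the very indices responsible for $\Ccalm$ violating the \gls{upo} in Proposition~\ref{prop:short_polarlike}. I would then show that $\pi(\Zcal)=\Zcal$ forbids exactly these stray terms from escaping $\mathrm{span}\{g_i:i\in\Ical\}$, so that $f\circ\pi$ falls back into the code and $\pi\in\Acal(\Ccalm)$.

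The main obstacle is this reverse direction: proving that stabilising $\Zcal$ is \emph{sufficient}, not merely necessary. Since $\Ical$ is not downward-closed, I cannot appeal to the clean BLTA closure argument available for polar-like codes, so I would have to track the monomials generated by $g_i\circ\pi$ for a general admissible $\pi\in\Pi$ and verify that every term landing outside $\Ical$ is in fact indexed in $\Zcal$ and is removed by the constraint $\pi(\Zcal)=\Zcal$. Controlling these generated monomials — rather than the essentially bookkeeping forward inclusion — is the crux of the argument.
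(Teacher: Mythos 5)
Your plan correctly isolates the two inclusions, but the direction you yourself flag as the crux --- sufficiency of $\pi(\Zcal)=\Zcal$ --- is left without a working mechanism, and the monomial-tracking route you sketch would not supply one: stabilising a set of \emph{coordinates} does not act coefficient-wise on the monomial expansion of $g_i\circ\pi$, so there is no sense in which \eqref{eq:constraint_perm} ``removes'' stray $\Zcal$-indexed terms from a span. The paper closes this direction at the codeword/matrix level instead. Membership in $\Pi$ already guarantees (this is what \cite[Theorem 2]{AE_v1} provides, restated in Proposition~\ref{prop:dyn_short}) that every freezing constraint in $\Fcal\setminus\Zcal$ survives any $\pi\in\Pi$; hence the only possible obstruction to $\pi$ being an automorphism is the shortening constraint \eqref{eq:constraint_shortening}, $\GN(\Ical,\Zcal)=\mathbf{0}$. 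Describing the permuted code through $\mathbf{G}_T=\GN\mathbf{T}\GN$, condition \eqref{eq:constraint_perm} maps the columns indexed by $\Zcal$ among themselves, so the shortened coded bits remain linear combinations of frozen bits only, i.e., $\mathbf{G}_T(\Ical,\Zcal)=\mathbf{0}$ as in \eqref{eq:transformed_constraint}, giving $\pi\in\Acal(\Ccalm)$ with no term-by-term control of $g_i\circ\pi$ required. If you want to repair your span argument while staying in the evaluation picture, the clean route is: let $\widetilde{\Ccal}$ be the $\GN$-coset code with information set $\Ical\cup\Zcal$; since $\GN(\Zcal,\Zcal)$ is unit lower-triangular, hence invertible, one has $\Ccalm=\{\xbf\in\widetilde{\Ccal}\,:\,\xbf_\Zcal=\mathbf{0}\}$; for $\pi\in\Pi$ the Proposition~\ref{prop:dyn_short} property yields $\pi(\Ccalm)\subseteq\widetilde{\Ccal}$, condition \eqref{eq:constraint_perm} forces $\pi(\Ccalm)$ to vanish on $\Zcal$, and equality follows from equal dimensions.

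Your necessity argument --- that the set of identically-zero coordinates is an invariant any coordinate-permutation automorphism must stabilise --- is in substance the paper's justification of \eqref{eq:constraint_not_verified}, but your proposed verification that this invariant set equals $\Zcal$ does not go through as stated. From $x_c=\sum_{i\succeq c}u_i$, coordinate $c$ vanishes on the whole code if and only if no $i\in\Ical$ satisfies $i\succeq c$, so ``zero set $=\Zcal$'' is a joint covering property of $\Ical$ and $\Zcal$ (every $c\notin\Zcal$ must be dominated by some $i\in\Ical$), not a consequence of upward-closedness of $\Zcal$ alone. For degenerate parameters it can genuinely fail: with $N=16$, block pattern $\Zcal=\{15\}$ and $K'=1$, $\Ical=\{14\}$, every odd coordinate vanishes identically, and an automorphism then need only stabilise this larger set. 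So this step requires either an explicit covering argument for the $(S,K')$ regimes of interest or the paper's formulation via $\mathbf{G}_T(\Ical,\Zcal)\neq\mathbf{0}$; in fairness, the paper asserts \eqref{eq:constraint_not_verified} with comparable brevity, but since your write-up makes the zero-set identification an explicit load-bearing step, the gap is yours to fill.
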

\begin{proof}
    \cite[Theorem 2]{AE_v1} gives the positions in $\mathbf{A}$ that do not modify the frozen set and information set, such that $\Acal(\Ccalm)\leq\Pi$.
    The shortening pattern ensures that the $S$ coded bits in $\Zcal$ are linear combinations of frozen bits only, i.e.,
    \begin{align}\label{eq:constraint_shortening}
        \GN(\Ical,\Zcal)=\mathbf{0},
    \end{align}
    where $\GN(\Ical,\Zcal)$ is the sub-matrix of $\GN$ formed by the rows stated in $\Ical$ and the columns stated in $\Zcal$ \cite{Punc_short_PC_nonsys}.
    Applying a permutation $\pi$ to a $\GN$-coset code is to change  $\GN$ by  $\mathbf{G}_T=\GN\mathbf{T}\GN$ \cite{geiselhart_subcode} where $\mathbf{T}\in\Fbb^{N\times N}$ is the permutation matrix based on $\pi$. 
    If $\pi\in\Pi$ and \eqref{eq:constraint_perm} is verified then the $S$ coded bits remain a linear combination of frozen bits, i.e.,
    \begin{align}\label{eq:transformed_constraint}
        \mathbf{G}_T(\Ical,\Zcal)=\mathbf{0}.
    \end{align}
    Given \eqref{eq:transformed_constraint}, the constraint \eqref{eq:constraint_shortening} remains valid such that $\pi\in\Acal(\Ccalm)$.    
    However, if $\pi\in\Pi$ and \eqref{eq:constraint_perm} is not verified, we have 
    \begin{align}\label{eq:constraint_not_verified}
        \mathbf{G}_T(\Ical,\Zcal)\neq\mathbf{0},
    \end{align}
    the shortening constraint \eqref{eq:constraint_shortening} is not verified.
    Hence, we have $\pi\notin\Acal(\Ccalm)$, finishing the proof. 
\end{proof}
As described in Theorem \ref{theo:group_shortening}, automorphisms of a mother code $\Ccalm$ represent only a portion of the permutations in the computed $\Pi$.
As an example, the $\Ccalbr(12,3,\Zcalbr)$ \gls{br} and $\Ccalb(12,3,\Zcalb)$ block shortened polar codes are studied. 
We have $\Zcalbr=\{3,7,11,15\}$ and $\Icalbr=\{12,13,14\}$ while $\Zcalb=\{12,13,14,15\}$ and $\Icalb=\{7,10,11\}$.
By using $\Icalbr$ and $\Icalb$ \cite[Theorem 2]{AE_v1}, the affine transformations not changing the structure of the $\GN$-coset mother codes $\Ccalmbr$ and $\Ccalmb$ are given by 	$\Abf_{\text{BR}}=\left[\begin{smallmatrix}
  		\star & \star & 0 & 0 \\
  		\star & \star & 0 & 0 \\
  		\star & \star & \star & \star \\
  		\star & \star & \star & \star
  	\end{smallmatrix}\right]$ and $\Abf_{\text{block}}=\left[
    \begin{smallmatrix}
  		1 & 0 & \star & 0 \\
  		\star & 1 & \star & \star \\
  		0 & 0 & 1 & 0 \\
  		0 & 0 & \star & 1
  	\end{smallmatrix}\right]$.
$\Abf_{\text{BR}}$ allows $\NBstar=12$ admissible positions  and we note that $\Acal(\Ccalmbr)\leq\BLTA(2,2)=\Pi$, the size of such group is known and is
 $|\BLTA(2,2)|=9216$ \cite{geiselhart2021automorphismPC,AE_v2}.
By exhaustive search, we found $|\Acal(\Ccalmbr)|=2304$ automorphisms, i.e., permutations $\pi\in\BLTA(2,2)$ verifying \eqref{eq:constraint_perm}.
The structure of the invertible matrix $\Abf_{\text{block}}$ limits the number of invertible matrices, and thus, the number of permutations $|\Pi|$.
We computed $|\Pi|=512$ and $|\Acal(\Ccalmb)|=128$ automorphisms verifying \eqref{eq:constraint_perm} and forming $\Acal(\Ccalmb)$.

The ratio $\frac{|\Acal(\Ccalm)|}{|\Pi|}$ is studied for $N=256$.
Given $|\Zcal|=S$ and a randomly generated permutation $\pi\in\Pi$, $E_S$ denotes the event $\pi(\Zcal)=\Zcal$,i.e., $\pi\in\Acal(\Ccalm)$. Its expected value is noted $\Ebb(E_S)$. 
A low $\Ebb(E_S)$ translates into a smaller order of $\Acal(\Ccalm)$, noted $|\Acal(\Ccalm)|$.
For $K=120$, $N=256$ and $1\leq S\leq \frac{N}{2}$, $\Ebb(E_S)$ is going down to $\Ebb(E_{97})=0.23\%$.
The greater probability is obtained for $S=64$ with $\Ebb(E_{64})=20.2\%$.
The mean is $\frac{1}{127}\sum_{S=1}^{127}\Ebb(E_S)=1.48\%$.
For the block pattern, the probabilities are even lower.
The mean is $\frac{1}{127}\sum_{S=1}^{127}\Ebb(E_S)=0.48\%$.
Hence, finding automorphisms for \gls{ae} decoding is tricky and requires pre-computations based on $\Ccalm$, $\Pi$, and $\Zcal$. 
Next, we investigate $\Acal(\Ccalm)$ for all parameters $(S,K')$.

\subsection{Analysis For All Shortening Parameters $(S,K')$}
Theorem~\ref{theo:group_shortening} states that $\Acal(\Ccalm)\leq \Pi$. 
Since $|\Pi|$ grows exponentially with $N$, finding $|\Acal(\Ccalm)|$ requires an impracticable number of simulations, especially since $\Pi$ and $\Acal(\Ccalm)$ change for all pairs $(S,K')$. 
To estimate the relative size of $\Pi$, the number of admissible positions $\NBstar$ in $\Abf$ is counted. 
The analysis is for both \gls{br} and block patterns and all shortening parameters $(S,K')$, namely, $0 < S <\frac{N}{2}$ and $0\leq K'\leq N-S$.
For a pair $(S,K')$, the shortened codes are designed at a \gls{snr} of $0$ \gls{db}. 
The number of admissible positions $\NBstar$ is compared with the number of admissible positions available $\NBstar_{\LT}$ for the smaller affine automorphism group $\Acal(\Ccal)$ of a polar-like code, i.e., the \gls{lta} group. 
A lower-triangular matrix $\mathbf{L}\in\Fbb_2^{n\times n}$ has $\NBstar_{\LT}=\sum_{i=1}^{n-1}i=\frac{n(n-1)}{2}$ admissible positions.
$\NBstar_{\LT}$ is used as a threshold to classify the $\GN$-coset mother codes $\Ccalm$  in two sets of codes $\Ccal_{\text{large}}$ and $\Ccal_{\text{small}}$.
If $\NBstar>\NBstar_{\LT}$ $\left(\NBstar\leq\NBstar_{\LT}\right)$, $\Ccalm$ is considered to have a large (small) affine automorphism group $\Acal(\Ccalm)$ and belongs to $\Ccal_{\text{large}}$ ($\Ccal_{\text{small}}$). 
\begin{figure}[t]
\centering    
\includegraphics[width=0.75\columnwidth]{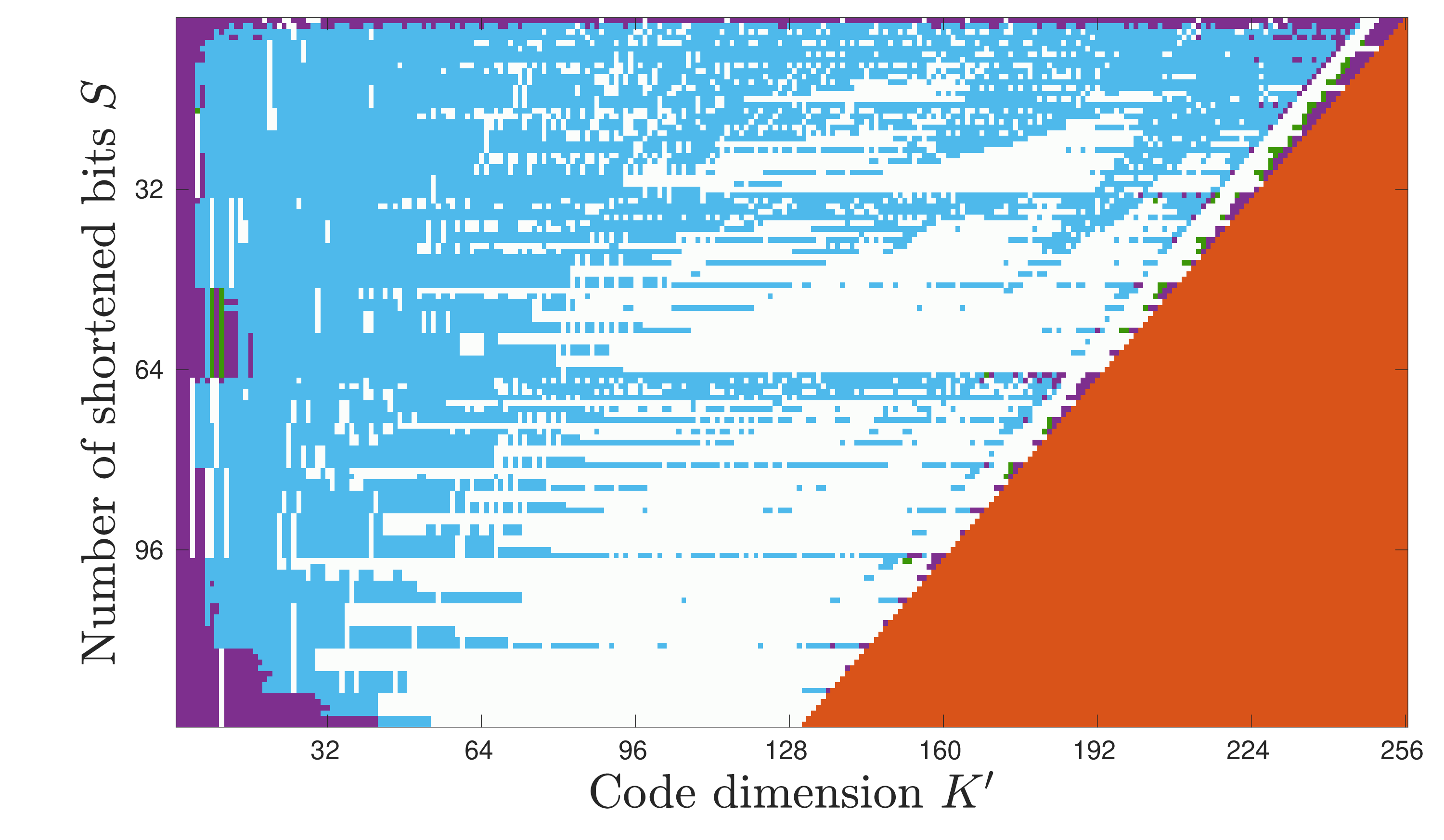}
    \caption{Classification of $\Ccalmbr$ and $\Ccalmb$ for $N=256$ and a design \gls{snr} of $0$\,dB.
    Blue (green) positions correspond to $\Ccalmbr\in\Ccal_{\text{large}}$ $\left(\Ccalmb\in\Ccal_{\text{large}}\right)$, purple positions to  $\Ccalmbr\wedge\Ccalmb\in\Ccal_{\text{large}}$, and white positions to  $\Ccalmbr\wedge\Ccalmb\notin\Ccal_{\text{large}}$.}
    \label{fig:256_combination}
\end{figure}
The number of codes with $\NBstar>\NBstar_{\LT}$ is denoted $|\Ccal_{\text{large}}|$ and is compared to the number of possible combinations $(S,K')$, denoted $|(S,K')|=\sum_{s=1}^{N/2-1}N-s=\frac{3N}{4}\left(\frac{N}{2}-1\right)$.
As for the example in Section\,\ref{subsec:aff_auto_group_SPC}, the \gls{br} pattern is more likely to generate $\Ccalm\in\Ccal_{\text{large}}$ compared to the block pattern.
For $N=\{64,128,256\}$, more than half of the $\GN$-coset mother codes for the \gls{br} pattern belong to $\Ccal_{\text{large}}$. 
For both patterns, the ratio $\frac{|\Ccal_{\text{large}}|}{|(S,K')|}$ is decreasing with the code length $N$ and increasing with the design \gls{snr}.

The analysis is completed by checking which pairs $(S,K')$ allow $\Ccalm\in\Ccal_{\text{large}}$ for \gls{br} $\left(\Ccalmbr\right)$ and block $\left(\Ccalmb\right)$ shortening.
\autoref{fig:256_combination} illustrates these classification for $N=256$.
For $K'\approx0$ or for $S\approx0$, both patterns exhibit $\NBstar>\NBstar_{\LT}$ and are depicted in purple.
The white positions represent shortened codes with small $\NBstar\leq\NBstar_{\LT}$ regardless of the shortening pattern. 
These codes are mostly located on a diagonal for which $K'+S>\frac{N}{2}$.
Finally positions in orange correspond to non-achievable shortening parameters, i.e., $K'>N-S$, namely the  message length is larger than the number of bits sent over the channel.


\subsection{Action of Permutations} \label{subsec:nonsys_perm_constraint}   
To conclude this section, the action of permutations on $\Ccalm$ is investigated.
Applying an automorphism $\pi\in\Acal(\Ccalm)$ to $\Ccalm$ does not generate dynamic frozen bits. 
However, the action of $\pi\in\Pi\setminus\Acal(\Ccalm)$ on $\Ccalm$ could, thus it is studied.
\begin{proposition}
    \label{prop:dyn_short}
    Applying a permutation $\pi\in\Pi\setminus\Acal(\Ccalm)$ on a $\GN$-coset mother code $\Ccalm$ transforms at least one of the frozen bits $i\in\Fcal\cap\Zcal$ into a dynamic frozen bit.
    All frozen bits in $\Fcal\setminus\Zcal$ remain frozen to 0.
\end{proposition}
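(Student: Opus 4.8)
The plan is to reuse the generator substitution $\mathbf{G}_T=\GN\mathbf{T}\GN$ from the proof of Theorem~\ref{theo:group_shortening} and to read off the frozen-bit relations column by column. Throughout, a coordinate $i\in\Fcal$ is \emph{statically} frozen when its defining relation is $u_i=0$, and \emph{dynamically} frozen when that relation is a nontrivial linear combination that involves at least one information bit.

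First I would settle the easy part, namely that every $i\in\Fcal\setminus\Zcal$ stays static. Since $\pi\in\Pi$, the permutation preserves the $\GN$-coset structure, so $\Fcal$ is mapped onto itself and the relations $u_i=0$ for $i\in\Fcal$ are inherited by the transformed code. The only relations that can mix an information bit into a frozen coordinate come from the shortening requirement, and by \eqref{eq:constraint_shortening}--\eqref{eq:transformed_constraint} this requirement is carried solely by the $\Zcal$-indexed columns of the generator. A coordinate in $\Fcal\setminus\Zcal$ is never the one isolated when those $S$ shortening relations are resolved; it enters them only through its static value $0$. Hence its relation remains $u_i=0$.

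For the main claim I would invoke $\pi\in\Pi\setminus\Acal(\Ccalm)$, which by definition violates \eqref{eq:constraint_perm}, so that $\pi(\Zcal)\neq\Zcal$ and, by the computation in the proof of Theorem~\ref{theo:group_shortening}, $\mathbf{G}_T(\Ical,\Zcal)\neq\mathbf{0}$. Choose $i\in\Ical$ and $z\in\Zcal$ with $(\mathbf{G}_T)_{i,z}\neq0$. In the transformed code the shortened coordinate $z$ is then no longer a combination of frozen bits alone: it receives a nonzero contribution from the information bit $u_i$. Enforcing $x_z=0$ therefore cannot be achieved by setting the $\Zcal$-frozen bit to the constant $0$; solving that relation expresses the frozen coordinate as a nontrivial linear combination of information bits, i.e.\ it becomes dynamic. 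This produces at least one dynamic frozen bit inside $\Fcal\cap\Zcal=\Zcal$.

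The step I expect to be the main obstacle is verifying that the coupling flagged by $\mathbf{G}_T(\Ical,\Zcal)\neq\mathbf{0}$ can always be resolved \emph{within} $\Zcal$ and in an order compatible with successive cancellation, so that the dynamic bit depends only on previously decoded coordinates. This should follow by combining the lower-triangular form of $\GN$---which lets each shortening column be solved for its own $\Zcal$-indexed coordinate---with the fact that $\pi\in\Pi$ keeps $\Fcal$ invariant, thereby confining the resolution to $\Zcal$ and keeping every $\Fcal\setminus\Zcal$ coordinate out of any dynamic relation.
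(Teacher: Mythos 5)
Your proposal follows essentially the same route as the paper's proof: the second claim is read off from the definition of $\Pi$ (which fixes $\Ical$ and $\Fcal$), and the first claim comes from observing that $\pi\in\Pi\setminus\Acal(\Ccalm)$ violates \eqref{eq:constraint_perm}, so $\mathbf{G}_T(\Ical,\Zcal)\neq\mathbf{0}$ as in \eqref{eq:constraint_not_verified}, and enforcing $\xbf_\Zcal=\mathbf{0}$ forces some bit of $\mathbf{u}_\Zcal$ to cancel an information-bit contribution, i.e., to become dynamically frozen. The ``obstacle'' you flag (resolvability within $\Zcal$ in an SC-compatible order) is not actually required by the proposition and is likewise not treated in the paper, whose proof simply asserts the cancellation by $\mathbf{u}_\Zcal$ exactly as you do.
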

\begin{proof}
    Given the definition of $\Pi$, applying a permutation $\pi\in\Pi$ does not change $\Ical$ and $\Fcal$ and the frozen bits in $\Fcal\setminus\Zcal$ remain frozen to 0 \cite{AE_v1}.
    However, $\Ccalm$ is now constrained due to shortening. 
    Applying such permutation induces \eqref{eq:constraint_not_verified}, hence one of the shortened bits $\xbf_\Zcal$ is a combination of information bits.
    To have $\xbf_\Zcal=\mathbf{0}$, $\mathbf{u}_\Zcal$ should cancel out this combination, finishing the proof.
\end{proof}
For example, the permutation $\pi=(0,3,14,\mathbf{13},8,11,6,\mathbf{5},4,7,10,\mathbf{9},12,15,2,\mathbf{1})\in\Pi\setminus\Acal(\Ccalm)$ is applied on the $(16,3)$ $\GN$-coset mother code of the $(12,3)$ \gls{br} shortened polar code $\Ccalbr$ of Section\,\ref{subsec:aff_auto_group_SPC}.
Proposition \ref{prop:dyn_short} states that the resulting freezing constraints contain at least one dynamic frozen bit since $\pi(\Zcalbr)=\{13,5,9,1\}\neq\Zcalbr$.
In this example, the shortening constraint is 
\begin{align}
    \mathbf{G}_T(\Icalbr,\Zcalbr) = \begin{bmatrix}
        0 & 0 & 0 & \mypoint{x12}{0}\\
        0&0&0&\mypoint{x13}{1}\\
        \mypoint{x3}{0}&\mypoint{x7}{0}&\mypoint{x11}{0}&\mypoint{x15}{0}    
    \end{bmatrix}\hspace{10mm}.
\end{align}
\begin{tikzpicture}[remember picture, overlay]
      \node[right=10mm of x12,anchor=east](textofhere2){$u_{12}$};
      \node[right=10mm of x13,anchor=east](textofhere2){$u_{13}$};
      \node[right=10mm of x15,anchor=east](textofhere2){$u_{14}$};
      \node[below=2pt of x3,rotate=-90,anchor=west](textofhere2){$x_3$};
      \node[below=2pt of x7,rotate=-90,anchor=west](textofhere2){$x_7$};
      \node[below=2pt of x11,rotate=-90,anchor=west](textofhere2){$x_{11}$};
      \node[below=2pt of x15,rotate=-90,anchor=west](textofhere2){$x_{15}$};
    \end{tikzpicture}
    \vspace{6pt}
    
Hence, $x_{15}\in\xbf_{\Zcalbr}$ carries the value of $u_{13}$. It leads to the dynamic freezing constraint $u_{15}=u_{13}$ in order to have $x_{15}=u_{13}+u_{15}=0$.
Next, \gls{ae} decoding using $\pi\in\Pi$ will be labeled as \emph{adjusted}. 
It requires additional memory with respect to \gls{ae} since it tracks the dynamic freezing constraints \cite{memory_AE_dynamic}.

\added[id=C]{The dynamic frozen bits are caused by the shortening constraint $\Zcal\subseteq\Fcal$.
As explained in Section \ref{subsec:shortening}, the shortening approach $\Zcal\subset\Ical$ for the mother code $\Ccalm$ is also possible.}
If this construction is chosen, bits in $\mathbf{u}_\Zcal$ are not considered as frozen during the computation of $\Acal(\Ccalm)$ \cite[Theorem 2]{AE_v1}.
Hence, the group of permutations $\Pi$ changes.
By definition of $\Pi$, applying a permutation $\pi\in\Pi$  does not change $\Ical$ and $\Fcal$ and the frozen bits in $\Fcal\setminus\Zcal=\Fcal$ remain frozen to 0 \cite{AE_v1}.
Hence, any permutations in $\Pi$ can be used, leading to $\Acal(\Ccalm)=\Pi$.
As a trade-off, the feedback during $\adec$ \eqref{eq:adec} is less reliable since $|\Fcal|$ is smaller.
    

\section{Simulation Results}\label{sec:sim_results}
The \gls{bler} performance of $(115,51,\Zcal)$ block and \gls{br} shortened polar codes  under \gls{ae} decoding are given.
A random selection of automorphisms was used.
The simulations are carried out over the \gls{awgn} channel with the \gls{bpsk} modulation.
\Gls{ae}-$M$-$\dec$ denotes \gls{ae} decoding using $\dec$ as sub-decoders with $M$ automorphisms.
\gls{scl} with a list size $L$ is denoted \gls{scl}-$L$. 
\gls{scan} and \gls{bp} using up to $T$ iterations are denoted \gls{scan}-$T$ and \gls{bp}-$T$, respectively.

\autoref{fig:AE_block} depicts the \gls{bler} performance of $(115,51,\Zcalb)$ code. 
SCL-4 ($\mathbin{
\tikz[baseline]{ \draw[color=black,-,thick] (0pt,.5ex) -- (2ex,.5ex) -- (4ex,.5ex);
\fill[color=black,-,thick] (1.5ex,.0ex) rectangle (2.5ex,1ex);}}$) and SC ($\mathbin{
\tikz[baseline]{ \draw[color=black,-,thick] (0pt,.5ex) -- (2ex,.5ex) -- (4ex,.5ex);
\draw[color=black,-,thick] (1.5ex,.0ex) rectangle (2.5ex,1ex);}}$) are shown for reference, as well as 5G's shortened polar code under CA-SCL-4 ($\mathbin{
\tikz[baseline]{ \draw[color=black!40,-,thick] (0pt,.5ex) -- (2ex,.5ex) -- (4ex,.5ex);
\fill[color=black!40,-,thick] (1.5ex,.0ex) rectangle (2.5ex,1ex);}}$).
\begin{align}
    \Abf_{\text{block}}=\left[
      \begin{smallmatrix}
  		\star & \star & 0 & 0& \star & \star & \star \\
  		\star & \star & 0 & 0 & \star & \star & \star\\
  		0 & 0 & 1 & 0 & \star & \star & \star\\
  		0 & 0 & 0 & 1& \star & \star & \star\\
        0 & 0 & 0 & 0& \star & \star & \star\\
        0 & 0 & 0 & 0& \star & \star & \star\\
        0 & 0 & 0 & 0& \star & \star & \star
  	\end{smallmatrix}\right]
   ,\,\,\,
   \Abf_{\text{BR}}=\left[
    \begin{smallmatrix}
        \star & \star & \star & 0& 0 & 0 & 0 \\
  		\star & \star & \star & 0 & 0 & 0 & 0\\
  		\star & \star & \star & 0 & 0 & 0 & 0\\
  		\star & \star & \star & 1 & 0 & 0 & 0\\
        \star & \star & \star & 0 & 1 & 0 & 0\\
        \star & \star & \star & 0 & 0 & \star & \star\\
        \star & \star & \star & 0 & 0 & \star & \star
  	\end{smallmatrix}\right]. \label{eq:matrixsim}
\end{align}
$\Abf_{\text{block}}$ \eqref{eq:matrixsim} generates $\Pi$ for the $(128,51)$ $\GN$-coset mother code, $\Abf_{\text{block}}$ has $\NBstar=25>15=\NBstar_{\LT}$.
Moreover, $\Abf_{\text{block}}$ exhibits $\star$'s location on upper right which is good performance-wise \cite{AE_v1}.
\gls{ae}-4-\gls{bp}-200 ($\mathbin{
\tikz[baseline]{\draw[color=orange,-,thick] (0pt,.5ex) -- (2ex,.5ex) -- (4ex,.5ex);
\draw[solid, color=orange,thick] (1.5ex,0ex) -- (2.5ex,1ex);
\draw[solid, color=orange,thick] (1.5ex,1ex) -- (2.5ex,0ex);}}$) and \gls{ae}-4-\gls{scan}-5 ($\mathbin{
\tikz[baseline] \draw[color=matlab1,-,thick] (0pt,.5ex) -- (2ex,.5ex) circle (.5ex) -- (4ex,.5ex);}$) outperform \gls{scl}-4  by 0.5\,dB but with a higher complexity caused by the soft iterations. 
\gls{ae}-4-\gls{sc} ($\mathbin{
\tikz[baseline]{\draw[color=matlab5,solid,thick] (0pt,.5ex) -- (2ex,.5ex) -- (4ex,.5ex);
\node[diamond,draw,solid, color=matlab5,thick, aspect=0.5, text width=4ex, inner sep=-5pt] (d) at (2ex,.5ex) {};
}}$) has similar performance with similar complexity.
CA-SCL-4 of 5G's shortened polar code has similar performance with the designed shortened polar code under SCL-4. 
We estimate $\Ebb(E_{13})=0.85\%$.
Hence, picking  automorphisms $\pi\in\Acal(\Ccalm)\leq\Pi$ require pre-computations, the adjusted \gls{ae} decoder takes into account the dynamic freezing constraints induced by selecting $\pi\in\Pi$. 
Adjusted AE-4-SCAN-5 ($\mathbin{
\tikz[baseline]{\draw[color=matlab1,dotted,thick] (0pt,.5ex) -- (2ex,.5ex) -- (4ex,.5ex);
\draw[solid, color=matlab1,thick](2ex,.5ex) circle (.5ex);}}$) shows similar performance with respect to AE-4-SCAN-5.
To complete the analysis, the shortened polar code $(115,51,\Zcalb)$ with the construction $\Zcal\in\Ical$ is decoded.
The $(128,64)$ mother code is $\mathcal{R}(3,7)$, its automorphism group is $\BLTA(7)$, 
and 4 automorphisms $\pi\in\BLTA(7)$ are picked randomly. The decoding performance is worse for \gls{ae}-4-\gls{bp}-200 ($\mathbin{
\tikz[baseline]{\draw[color=orange,dashed,thick] (0pt,.5ex) -- (2ex,.5ex) -- (4ex,.5ex);
\draw[solid, color=orange,thick] (1.5ex,0ex) -- (2.5ex,1ex);
\draw[solid, color=orange,thick] (1.5ex,1ex) -- (2.5ex,0ex);
}}$) and \gls{ae}-4-\gls{scan}-5 ($\mathbin{
\tikz[baseline]{\draw[color=matlab1,dashed,thick] (0pt,.5ex) -- (2ex,.5ex) -- (4ex,.5ex);
\draw[solid, color=matlab1,thick](2ex,.5ex) circle (.5ex);}}$) but better for \gls{ae}-4-\gls{sc} ($\mathbin{
\tikz[baseline]{\draw[color=matlab5,dashed,thick] (0pt,.5ex) -- (2ex,.5ex) -- (4ex,.5ex);
\node[diamond,draw,solid, color=matlab5,thick, aspect=0.5, text width=4ex, inner sep=-5pt] (d) at (2ex,.5ex) {};
}}$).

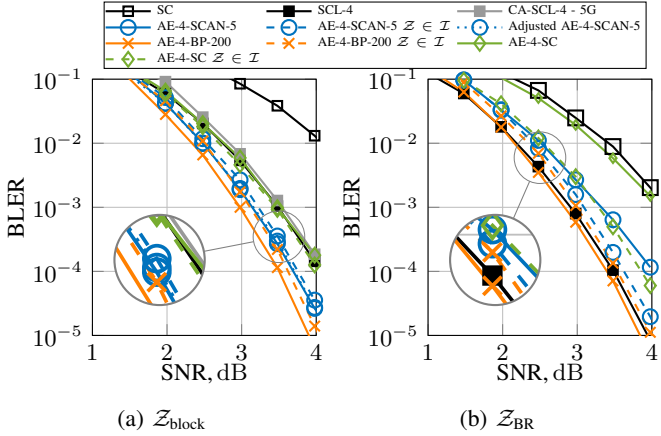
\begin{figure}[t]
    \centering
   \begin{subfigure}{0.49\columnwidth}
    \usetikzlibrary{spy}
\begin{tikzpicture}[spy using outlines={circle, magnification=1.75, connect spies}]
  \pgfplotsset{
   label style = {font=\fontsize{9pt}{7.2}\selectfont},
   tick label style = {font=\fontsize{9pt}{7.2}\selectfont}
  }

  \begin{semilogyaxis}[%
    width=1.05\columnwidth,
    height=1.15\columnwidth,
    xmin=1, xmax=4,
    xtick={},
        ytick={0.1,0.01,0.001,0.0001,0.00001},
    xlabel={SNR,\,$\mathrm{dB}$},
    xlabel style={yshift=0.8em},
    ymin=1e-5, ymax=0.1,
    ylabel style={yshift=-0.6em},
    ylabel={BLER},
    yminorticks, xmajorgrids,
    ymajorgrids, yminorgrids,
    legend style={at={(0.05,1.04)},anchor=south west},
    legend style={legend columns=3, font=\tiny, row sep=-1mm},
    legend style={fill=white, fill opacity=1, draw opacity=1,text opacity=1}, 
    legend style={inner xsep=0.2pt, inner ysep=-1pt}, 
    legend cell align={left}, 
    mark size=1.6pt, mark options=solid,
    ]
\addplot[solid,color=black, mark=square, line width=0.8pt, mark size=1.8pt]
table[row sep=crcr]{
-0.520977 0.88968 \\
-0.0209767 0.764526\\
0.479023 0.706215 \\
0.979023 0.536481 \\
1.47902 0.420875 \\
1.97902 0.289687 \\
2.47902 0.173732 \\
2.97902 0.0849185\\
3.47902 0.0384734\\
3.97902 0.0131 \\
4.47902 0.00353447 \\
4.97902 0.000681193\\
};
\addlegendentry{SC}   
\addplot[solid,color=black, mark=square*, line width=0.8pt, mark size=1.8pt]
table[row sep=crcr]{
-0.520977 0.645995\\
-0.0209767 0.548246\\
0.479023 0.366032\\
0.979023 0.255363\\
1.47902 0.13587\\
1.97902 0.0598229\\
2.47902 0.0204349\\
2.97902 0.00540821\\
3.47902 0.00104529\\
3.97902 0.00014356\\
4.47902 1.2e-5 \\
};
\addlegendentry{SCL-4 }
  
\addplot[solid,color=black!40, mark=square*, line width=0.8pt, mark size=1.8pt]
table[row sep=crcr]{
-0.520977 0.788644\\
-0.0209767 0.679348 \\
0.479023 0.49505 \\
0.979023 0.317662 \\
1.47902 0.165782 \\
1.97902 0.0915416 \\
2.47902 0.0255781 \\
2.97902 0.00684481 \\
3.47902 0.00127654 \\
3.97902 0.000185 \\
4.47902 1.2e-05 \\
};
 \addlegendentry{CA-SCL-4 - 5G}

\addplot[solid,color=matlab1, mark=o, line width=0.8pt, mark size=2.8pt]
table[row sep=crcr]{
-0.520977 0.672043 \\
-0.0209767 0.5896\\
0.479023 0.372578 \\
0.979023 0.239234\\
1.47902 0.129333 \\
1.97902 0.0419674\\
2.47902 0.0100733 \\
2.97902 0.0018859\\
3.47902 0.000270281\\
3.97902 2.64e-05 \\
4.47902 4e-06 \\
};
    \addlegendentry{AE-4-SCAN-5 }
\addplot[dashed,color=matlab1,mark=o, line width=0.8pt, mark size=2.8pt]
table[row sep=crcr]{-0.520977 0.748503 \\
-0.0209767 0.5980\\
0.479023 0.41876 \\
0.979023 0.255624\\
1.47902 0.135355 \\
1.97902 0.0517063\\
2.47902 0.0117841\\
2.97902 0.0026823\\
3.47902 0.000356458\\
3.97902 3.56e-05\\ 
4.47902 2.8e-06 \\};
\addlegendentry{AE-4-SCAN-5 $\Zcal\in\Ical$}

\addplot[dotted,color=matlab1,mark=o, line width=0.8pt, mark size=2.8pt]
table[row sep=crcr]{
-0.520977 0.7062\\
-0.0209767 0.566\\
0.479023 0.37936\\
0.979023 0.22381\\
1.47902 0.117481\\
1.97902 0.042030\\
2.47902 0.009905\\
2.97902 0.001973\\
3.47902 0.000297\\
3.97902 2.74e-05\\
4.47902 2e-6\\
};
\addlegendentry{Adjusted AE-4-SCAN-5 }
    

    \addplot[solid,color=orange, mark=x, line width=0.8pt, mark size=2.8pt]
table[row sep=crcr]{
-0.520977 0.710227\\
-0.0209767 0.563063\\
0.479023 0.347705 \\
0.979023 0.190404\\
1.47902 0.106157 \\
1.97902 0.02815\\
2.47902 0.00657358\\
2.97902 0.00100025\\
3.47902 0.000114579\\
3.97902 6.4e-06\\
};
    \addlegendentry{AE-4-BP-200}

\addplot[dashed,color=orange, mark=x, line width=0.8pt, mark size=2.8pt]
table[row sep=crcr]{
-0.520977 0.759878 \\
-0.0209767 0.612745\\
0.479023 0.408497 \\
0.979023 0.247036\\
1.47902 0.122669 \\
1.97902 0.0459812 \\
2.47902 0.0108729 \\
2.97902 0.00174222\\ 
3.47902 0.000219728 \\
3.97902 1.4e-05 \\
4.47902 8e-07 \\
};
\addlegendentry{AE-4-BP-200 $\Zcal\in\Ical$}

\addplot[solid,color=matlab5, mark=diamond, line width=0.8pt, mark size=2.8pt]
table[row sep=crcr]{
-0.520977 0.67750\\
-0.0209767 0.5668\\
0.479023 0.382263\\
0.979023 0.259875\\
1.47902 0.15024\\
1.97902 0.0651551\\
2.47902 0.0203004\\
2.97902 0.0059103\\
3.47902 0.0009885\\
3.97902 0.0001790\\
4.47902 1.88e-05 \\
};
\addlegendentry{AE-4-SC}
\addplot[dashed,color=matlab5, mark=diamond, line width=0.8pt, mark size=2.8pt]
table[row sep=crcr]{
-0.520977 0.690608\\
-0.0209767 0.564\\
0.479023 0.388199 \\
0.979023 0.23786\\
1.47902 0.139821 \\
1.97902 0.059297\\
2.47902 0.0181449 \\
2.97902 0.004602\\
3.47902 0.00092854\\
3.97902 0.000126\\
4.47902 1.24e-05 \\
4.97902 8e-07 \\
};
\addlegendentry{AE-4-SC $\Zcal\in\Ical$}
     \coordinate (spypoint) at (axis cs:3.5,0.00035); 
  \coordinate (magnifyglass) at (axis cs:1.9,0.000152); 
  \end{semilogyaxis}

  \spy [gray, size=1.2cm] on (spypoint)
   in node[fill=white] at (magnifyglass);

\end{tikzpicture}%

    
    \caption{$\Zcalb$}\label{fig:AE_block}
   \end{subfigure}
   \begin{subfigure}{0.49\columnwidth}
    \usetikzlibrary{spy}
\begin{tikzpicture}[spy using outlines={circle, magnification=1.75, connect spies}]
  \pgfplotsset{
   label style = {font=\fontsize{9pt}{7.2}\selectfont},
   tick label style = {font=\fontsize{9pt}{7.2}\selectfont}
  }

  \begin{semilogyaxis}[%
    width=1.05\columnwidth,
    height=1.15\columnwidth,
    xmin=1, xmax=4,
    xtick={},
    ytick={0.1,0.01,0.001,0.0001,0.00001},
    xlabel={SNR,\,$\mathrm{dB}$},
    xlabel style={yshift=0.8em},
    ymin=1e-5, ymax=0.1,
    ylabel style={yshift=-0.6em},
    ylabel={BLER},
    yminorticks, xmajorgrids,
    ymajorgrids, yminorgrids,
    legend style={at={(0,0)},anchor=south west},
    legend style={legend columns=3, font=\small, row sep=-1mm},
    legend style={fill=white, fill opacity=1, draw opacity=1,text opacity=1}, 
    legend style={inner xsep=0.5pt, inner ysep=-1pt}, 
    legend cell align={left}, 
    mark size=1.6pt, mark options=solid,
    ]
\addplot[solid,color=black, mark=square, line width=0.8pt, mark size=2.8pt]
table[row sep=crcr]{
-0.520977 0.806452 \\
-0.0209767 0.7122\\
0.479023 0.548246\\
0.979023 0.398724\\
1.47902 0.238777 \\
1.97902 0.127097 \\
2.47902 0.0668271\\
2.97902 0.0248731\\
3.47902 0.00888099 \\
3.97902 0.0020303\\
4.47902 0.000374703\\
4.97902 6e-05\\
};
\addplot[solid,color=black, mark=square*, line width=0.8pt, mark size=1.8pt]
table[row sep=crcr]{
-0.520977 0.573394\\
-0.0209767 0.411\\
0.479023 0.257998 \\
0.979023 0.12853\\
1.47902 0.0597943 \\
1.97902 0.018033\\
2.47902 0.00431742\\
2.97902 0.000798\\
3.47902 0.0001045 \\
3.97902 8.4e-06 \\
};
 

\addplot[solid,color=matlab1, mark=o, line width=0.8pt, mark size=2.8pt]
table[row sep=crcr]{
-0.520977 0.6811\\
-0.0209767 0.5252\\
0.479023 0.3591 \\
0.979023 0.1971\\
1.47902 0.0963 \\
1.97902 0.03333\\
2.47902 0.01109 \\
2.97902 0.002690\\ 
3.47902 0.0006394\\
3.97902 0.000116\\
};
\addplot[dashed,color=matlab1,mark=o, line width=0.8pt, mark size=2.8pt]
table[row sep=crcr]{-0.520977 0.698324\\
-0.0209767 0.577\\
0.479023 0.384025 \\
0.979023 0.21079\\
1.47902 0.0977708 \\
1.97902 0.032778\\
2.47902 0.00830317\\
2.97902 0.001581\\
3.47902 0.00019781\\
3.97902 1.96e-05\\
4.47902 1.2e-06\\
};


    \addplot[solid,color=orange, mark=x, line width=0.8pt, mark size=2.8pt]
table[row sep=crcr]{
-0.520977 0.66666 \\
-0.0209767 0.4960\\
0.479023 0.3097\\
0.979023 0.1502\\
1.47902 0.063 \\
1.97902 0.017799\\
2.47902 0.00350292\\
2.97902 0.000585272\\
3.47902 7.1e-5 \\
3.97902 4.8e-6\\
};

\addplot[dashed,color=orange, mark=x, line width=0.8pt, mark size=2.8pt]
table[row sep=crcr]{
-0.520977 0.712251 \\
-0.0209767 0.5720\\
0.479023 0.390625\\
0.979023 0.193199\\
1.47902 0.0894454\\
1.97902 0.0267465\\
2.47902 0.00696631 \\
2.97902 0.0010616\\
3.47902 0.000134406\\
3.97902 1.12e-05\\ 
4.47902 8e-07 \\
};

\addplot[solid,color=matlab5, mark=diamond, line width=0.8pt, mark size=1.8pt]
table[row sep=crcr]{
-0.520977 0.748503\\
-0.0209767 0.631\\
0.479023 0.511247 \\
0.979023 0.33112\\
1.47902 0.219106 \\
1.97902 0.105485\\
2.47902 0.0506997 \\
2.97902 0.019325\\
3.47902 0.00583826\\
3.97902 0.001499\\
4.47902 0.00025545\\
};
\addplot[dashed,color=matlab5, mark=diamond, line width=0.8pt, mark size=2.8pt]
table[row sep=crcr]{
-0.520977 0.66313 \\
-0.0209767 0.543\\
0.479023 0.37092\\
0.979023 0.19936\\
1.47902 0.0961169 \\
1.97902 0.041308\\
2.47902 0.0121054 \\
2.97902 0.003010\\
3.47902 0.00049602\\
3.97902 6e-5\\
};
     \coordinate (spypoint) at (axis cs:2.5,0.006); 
  \coordinate (magnifyglass) at (axis cs:1.9,0.000152); 
  \end{semilogyaxis}

  \spy [gray, size=1.2cm] on (spypoint)
   in node[fill=white] at (magnifyglass);

\end{tikzpicture}%

    
    \caption{$\Zcalbr$}\label{fig:AE_br}
   \end{subfigure}
  \caption{\gls{bler} of $(115,51,\Zcal)$ shortened polar codes.}\label{fig:error_AE}
  \end{figure}
\autoref{fig:AE_br} depicts the \gls{bler} performance for the \gls{br} shortened polar code $(115,51,\Zcalbr)$.
\Gls{sc} and \gls{scl}-4 are shown to perform better.
$\Abf_{\text{BR}}$ has $\NBstar=25$ as for $\Abf_{\text{block}}$ but has a worse shape \cite{AE_v1}.
Hence, the gain of \gls{ae}-4-\gls{sc} with respect to SC is tiny.
The \gls{bp} schedule permits to scramble more along the decoding such that \gls{ae}-4-\gls{bp}-200 is showing great performance. 
The shortening construction $\Zcal\subset\Ical$ allows to perform the decoding on the same mother code from \autoref{fig:AE_block}, i.e., $\mathcal{R}(3,7)$.
It allows to choose a larger pool of automorphisms with respect to those generated by $\Abf_{\text{BR}}$. 
With their schedule based on \gls{sc}, \gls{ae}-4-\gls{scan}-5 and \gls{ae}-4-\gls{sc} have improved performance, while that of \gls{ae}-4-\gls{bp}-20 is reduced.

\autoref{fig:lat_S13} depicts the average execution time of \gls{ae}-4-\gls{bp}-200 to decode a $(115,51)$ shortened code.
The $4$ \gls{bp} decoders include an early-termination scheme\cite{BP_early_termination}.
The average maximum number of BP iterations performed is noted $\mathbb{E}[T_{max}]$.
Due to the parallel nature of \gls{ae}-\gls{bp}, $\mathbb{E}[T_{max}]$ is used to compute the average execution time $\mathcal{L}_{\text{avg}}$ (in clock cycles) of \gls{ae}-\gls{bp} \cite{geiselhart_subcode} as $\mathcal{L}_{\text{avg}}=(2n+2)\mathbb{E}[T_{max}]+1$.
The $\Zcal\subseteq\Fcal$ shortening construction permits a more reliable feedback than that of $\Zcal\subset\Ical$. For both patterns, using $\Zcal\subset\Ical$ leads to worse average execution time.
The execution time of \gls{ae}-4-\gls{bp}-200 is compared with the decoding latency of \gls{scl}-4 with parallel processing \cite{SCL_LLR}    $\mathcal{L}_{\text{SCL}}=2N+K$.
$\mathcal{L}_{\text{SCL}}$ is matched at $\text{BLER}=6\cdot10^{-2}$ and at $\text{BLER}=7\cdot10^{-3}$ in the best and  worst case  scenario.
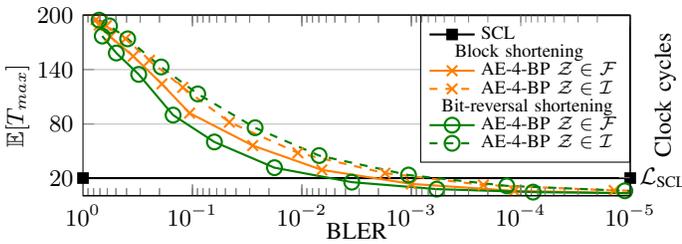
\begin{figure}
    \centering    \begin{tikzpicture}
    \def\width{\columnwidth}
    \def\height{1.3\columnwidth}
\pgfplotsset{width=7cm,compat=1.3}
  \pgfplotsset{
    label style = {font=\fontsize{9pt}{7.2}\selectfont},
    tick label style = {font=\fontsize{9pt}{7.2}\selectfont}
  }
   \begin{axis}[%
    width=\columnwidth,
    height=0.45\columnwidth,
    xmin=1e-5, xmax=1,
    xlabel={\gls{bler}},
    xlabel style={yshift=1.6em},
    ymin=1, ymax=200,
    x dir=reverse,
    xmode=log,
    ylabel style={yshift=-0.4em},
    ylabel={$\mathbb{E}[T_{max}]$},
    ytick={20,80,140,200},
    xlabel style={yshift=-0.8em},
    ymajorgrids, yminorgrids,
    legend style={at={(1.0,1.0)},anchor=north east},
    legend style={legend columns=1, font=\scriptsize, column sep=0mm, row sep=-1.2mm}, 
    legend style={fill=white, fill opacity=0.75, draw opacity=0.75,text opacity=1}, 
    legend cell align={left}, 
    mark size=2.8pt, mark options=solid,
    ] 
        \addplot[color=black,  mark=square*,mark size=1.8pt, line width=0.8pt] table[row sep=crcr]{%
    1 20.5\\    
    1e-5 20.5\\    
    };
    \addlegendentry{SCL}
\addlegendimage{empty legend}
    \addlegendentry{\hspace{-10pt}Block shortening}
    \addplot[color=orange,  solid,mark=x, line width=0.8pt] table[row sep=crcr]{%
    0.710227	189.074\\
0.563063	176.282\\
0.347705	154.694\\
0.190404	124.183\\
0.106157	92.0815\\
0.0281500	56.3295\\
0.00657358	29.5220\\
0.00100025	14.0853\\
0.000114579	6.44130\\
6.40000e-06	3.34629\\
    };
    \addlegendentry{AE-4-BP $\Zcal\in\Fcal$}

        \addplot[color=orange,  dashed,mark=x, line width=0.8pt] table[row sep=crcr]{%
0.759878 195.158\\
0.612745 188.304\\
0.408497 174.057\\
0.247036 150.149\\
0.122669 120.381\\
0.0459812 82.101\\
0.0108729 48.2638\\
0.00174222 25.9433\\
0.000219728 12.9119\\
1.4e-05 6.88982\\
8e-07 4.33035\\
    };
    \addlegendentry{AE-4-BP $\Zcal\in\Ical$}

\addlegendimage{empty legend}
    \addlegendentry{\hspace{-15pt}Bit-reversal shortening}
        \addplot[color=green!50!black,  solid,mark=o, line width=0.8pt] table[row sep=crcr]{%
0.66666 176.92\\
0.4960 158.24\\
0.3097 134.70\\
0.1502 89.87\\
0.063 60.28\\
0.017799 31.94\\
0.00350292 16.02\\
0.000585272 8.34\\
7.72e-5 5.05\\
4.8e-6 3.37\\
    };
    \addlegendentry{AE-4-BP $\Zcal\in\Fcal$}
        \addplot[color=green!50!black,  dashed,mark=o, line width=0.8pt] table[row sep=crcr]{%
0.712251 194.45\\
0.572082 188.121\\
0.390625 173.864\\
0.193199 142.875\\
0.0894454 113.228\\
0.0267465 75.9483\\
0.00696631 45.3064\\
0.00106162 23.8452\\
0.000134406 11.9679\\
1.12e-05 6.50643\\
8e-07  4.19861\\
    };
    \addlegendentry{AE-4-BP $\Zcal\in\Ical$}

\end{axis}    
 \begin{axis}[%
    width=\columnwidth,
    height=0.45\columnwidth,
    xmin=1e-5, xmax=1,
    ymin=1, ymax=3200,
    x dir=reverse,
    xmode=log,
    xtick={},
    xticklabels={},
    ytick={320},
    yticklabels={$\mathcal{L}_{\text{SCL}}$},
    ylabel style={yshift=2em},
      axis y line*=right,
    ylabel style={yshift=-0.1em, xshift=0.4em},
    ylabel={Clock cycles},
    yminorticks,
    mark size=1.8pt, mark options=solid,
    ] 
\end{axis}    
\end{tikzpicture}%
    \caption{Average execution time to decode $(115,51)$ code.}
    \label{fig:lat_S13}
\end{figure}
\section{Conclusions}\label{sec:conclusion}
In this paper, low-latency \gls{ae} decoding of shortened polar codes is proposed. It requires the knowledge of the affine automorphism group $\Acal(\Ccalm)$  of the mother code $\Ccalm$.
By describing $\Ccalm$ as a $\GN$-coset code,  $\Acal(\Ccalm)$ is proved to be not empty but has an irregular shape due to the shortening constraint $\Zcal\subseteq\Fcal$.
The shortening constraint $\Zcal\subset\Ical$ is also simulated leading to another $\GN$-coset code $\Ccalm$.
For $(115,51)$ shortened polar codes, the \gls{bler} performance is maximised under \gls{ae}-\gls{bp} and the construction $\Zcal\subseteq\Fcal$.
\gls{ae}-4-\gls{bp} outperforms \gls{scl}-4 by $0.5$ dB at $\text{BLER}=10^{-3}$ while having a lower average decoding execution time for $\text{BLER}=2\cdot10^{-3}$ and lower.
\vspace{-10pt}
\bibliographystyle{IEEEbib}
\bibliography{IEEEabrv,ConfAbrv,references}
\end{document}